\newtheorem{prop}{Proposition}
\newcommand{\be}{\begin{equation}}
\newcommand{\ee}{\end{equation}} %\indent}
\newcommand{\eei}{\end{equation}\indent\indent}
\newcommand{\bc}{\begin{center}}
\newcommand{\ec}{\end{center}}
\newcommand{\ber}{\begin{eqnarray*}}
\newcommand{\ear}{\end{eqnarray*}}
\newcommand{\ba}{\begin{array}}
\newcommand{\ea}{\end{array}}
\newcommand{\na}{\nabla}
\newcommand{\ti}{\tilde}
\newcommand{\bea}{\begin{eqnarray}}
\newcommand{\eea}{\end{eqnarray}}
\newcommand{\nn}{\nonumber}
\newcommand{\ei}{\end{itemize}}
\newcommand{\bra}[1]{\left(#1\right)}
\newcommand{\bras}[1]{\left[#1\right]}
\newcommand \veps {\varepsilon} %curly epsilon
\newcommand{\Lietwo}{{\cal L}}
\newcommand{\reff}[1]{(\ref{#1})}
\def\case#1/#2{\textstyle\frac{#1}{#2} }
\def\fp{f^{\prime}}
\def\fpp{f^{\prime \prime}}
\title{\boldmath Scalar wave scattering from Schwarzschild black holes in modified gravity}
\author[a]{Dan B. Sibandze,}
\author[a]{Rituparno Goswami,}
\author[a]{Sunil D. Maharaj,}
\author[a]{Anne Marie Nzioki,}
\author[b]{Peter K. S. Dunsby}
\affiliation[a]{Astrophysics \& Cosmology Research Unit, 
School of Mathematics Statistics and Computer Science,
University of KwaZulu-Natal, 
Private Bag X54001, Durban 4000, South Africa.}
\affiliation[b]{Department of Mathematics and Applied Mathematics and ACGC, University of Cape Town, Cape Town,7701,  South Africa.}
\emailAdd{danx36@gmail.com}
\emailAdd{Goswami@ukzn.ac.za}
\emailAdd{maharaj@ukzn.ac.za}
\emailAdd{anne.nzioki@gmail.com}
\emailAdd{peter.dunsby@uct.ac.za}
\abstract{
We consider the scattering of gravitational waves off a Schwarzschild black hole in $f(R)$ gravity. We show that the reflection and transmission coefficients for tensor waves are the same as in General Relativity. While the scalar waves, which are not present in General Relativity, demonstrate interesting features. The equation that governs these scalar waves can be reduced to a Volterra integral equation. Analysis of this equation shows that  a larger fraction of these waves are reflected compared to what one obtains for tensors. This may provide a novel observational signature for fourth order gravity. }
\begin{document}
\sloppy
\maketitle
\nopagebreak
%%%%%%%%%%%%%%%%%%%%%%%%%%%%%%%%%%%%%%%%%%%%%%%%%%%%%
\section{Introduction} 
%%%%%%%%%%%%%%%%%%%%%%%%%%%%%%%%%%%%%%%%%%%%%%%%%%%%%
Over the past hundred years General Relativity (GR) has matured into what is now arguably one of the most successful theories of modern physics. It has allowed us to explain gravitational phenomena from solar system scales \cite{Clifton, Hu, Capozziello, Guo, Berry} all the way to some of the largest scales in the observable universe. With the first two direct detections of gravitational waves from coalescing black holes by LIGO \cite{ligo1,ligo2}, the past year has been a particularly triumphant period for GR.

Despite these successes, most well established tests of GR still only involve weak gravitational fields and motions with speeds much less that the speed of light.  While the recent LIGO events represented the first real strong-field tests 
of the theory and were consistent with GR, many more such observations will me needed to probe the dynamical 
features of the strong field regime, before we can be certain that all extensions of Einstein gravity can be ruled out. Furthermore, there remain issues on cosmological scales, where it appears that the standard model based on GR plus a cosmological constant (or dynamical Dark Energy), together with dark matter might be either incomplete or suffer from extreme fine-tuning resulting from the coincidence and cosmological constant problems. Indeed, one of the main motivations for considering alternatives to GR arise from the rather obscure nature of Dark Energy and Dark Matter candidates. In fact,  if we wish to retain the Robertson-Walker metric for describing the large-scale geometry of the universe, it appears that one of the only physically acceptable alternatives to a Dark Energy driven late-time acceleration is one where the acceleration is a consequence of the breakdown of GR on cosmological scales. This mismatch between astrophysical and cosmological scales is arguably one of the biggest problems facing fundamental physics today.

Some of the most natural and promising extensions to GR are those which appear as the low energy limit of fundamental theories such as String or M-theory (e.g., \cite{Strings}). Examples of such modifications of GR can be found in a particularly popular and now very extensively studied class of fourth order theories of gravity, the so called $f(R)$ theories of gravity. In these theories, the modification to the gravitational action is described by the addition of a general function of the Ricci scalar $R$ which leads to field equations which are fourth-order in the metric tensor $g_{ab}$ (in GR the field equations are second order in $g_{ab}$).. This implies that the gravitational generated by the usual spin-2 graviton degrees of freedom together with a scalar degree of freedom. These deviations from GR derive from the work on scalar-tensor theory by Brans and Dicke, Jordan and Fierz. \cite{BD,Jordon,Fierz}. 

On cosmological scales, we require that $f(R)$ theories reproduce cosmological dynamics consistent with type Ia supernovae, BAO, Large Scale Structure and CMB measurements. They should be free from tachyonic instabilities, sudden singularities and ghosts and they should have valid Newtonian and post-Newtonian limits  \cite{Sulona}. We should also expect that well defined solutions found in GR, such as the Schwarzschild solution, are stable against generic perturbations in this more general context. Failure to satisfy the aforementioned criteria disfavours the theory as a viable alternatives to GR.

In GR, linear perturbations of Schwarzschild black holes were first studied in detail by Chandrasekhar using the metric approach together with the Newman-Penrose formalism \cite{chandra}. More recently, the standard results of Black Hole perturbation theory were reproduced using the 1+1+2 covariant approach \cite{Clarkson}. In the metric approach, perturbations are described by two wave equations, i.e., the Regge-Wheeler equation for odd parity modes and the Zerilli equation in the even parity case. These wave equations are described by functions (and their derivatives) in the perturbed metric which are not gauge-invariant, as general coordinate transformations do not preserve the form of the wave equation. However, using the 1+1+2 covariant approach, Clarkson and Barrett \cite{Clarkson} demonstrated that both the odd and even parity perturbations may be unified in a single covariant wave equation, which is equivalent to the Regge-Wheeler equation. This wave equation is governed by a single covariant, gauge and frame-independent, transverse-traceless tensor. These results were extended to include couplings (at second order) to a homogeneous magnetic field leading to an accompanying electromagnetic signal alongside the standard tensor (gravitational wave modes) \cite{betschart} and to electromagnetic perturbations on general locally rotationally symmetric spacetimes \cite{Burston}. The 1+1+2 covariant approach was later applied to $f(R)$ gravity in \cite{Anne,Pratten} where all calculations were performed in the Jordan frame. The dynamics of the extra gravitational degree of freedom inherent in these fourth order theories was determined by the trace of the effective Einstein equations, leading to a linearised scalar wave equation for the Ricci scalar. 

Of particular importance is the broader validity of the Jebsen-Birkhoff theorem, which in GR GR states that any the spherically symmetric spacetime of the field equations has to be necessarily static or spatially homogenous. 

 In GR, it was found that this theorem was ``stable'' even with the introduction of small perturbations. More precisely,  almost spherical symmetry and/or almost vacuum imply almost static or almost spatially homogeneous \cite{amostBirk,amostBirkmatter,varBirk}. However, in higher order theories of gravity, such as fourth order theories of gravity, the validity of this theorem depends on extra conditions being satisfied. Some progress in this area was made in \cite{BirkfR}, where it was found that a non-zero measure exists within the parameter space of $f(R)$ theories for which stability holds under generic perturbations for a Jebsen-Birkhoff like theorem.

In this paper we consider some further perturbative results related to Schwarzschild black holes in $f(R)$ gravity. We investigate in detail how the scalar waves from infinity scatter off black holes due to the one dimensional potential barrier of the ``Schrodinger-like'' equation which governs the perturbations. This is done by computing the reflection and transmission coefficients and comparing them with what is found in GR. We find that for short wavelengths, a larger fraction of the spin-0 waves get reflected from the black hole potential barrier (in comparison to the spin-2 tensor waves). This may provide a novel observational signature for modified gravity.

Unless otherwise specified, geometric units ($8\pi G=c=1$) will be used throughout this paper. 

%%%%%%%%%%%%%%%%%%%%%%%%%%%%%%%%%%%%%%%%%%%%%%%%
\section{Higher Order Gravity}
%%%%%%%%%%%%%%%%%%%%%%%%%%%%%%%%%%%%%%%%%%%%%%%%

In general relativity (GR) the Einstein-Hilbert action is given as
\be
{\cal S} =\frac12 \int dV \bras{ \sqrt{-g} \bra{R-2\Lambda} +2\, \Lietwo_{M}(g_{ab}, \psi) } , 
\label{EHaction}
\ee
where $ \Lietwo_{M} $ is the Lagrangian density of the matter fields $\psi$, $ R $ is the Ricci 
scalar and $ \Lambda $ is the cosmological constant. The invariant 4-volume element 
is given by the expression $\sqrt{-g}\,dV$ and the gravitational Lagrangian density 
as $ \Lietwo_{g} = \sqrt{-g} \bra{R-2\Lambda}$, where $g$ is the determinant 
of the metric tensor $g_{ab}$. A generalisation of this 
action is done by replacing $R$ in (\ref{EHaction}) with a $C^2$ function of the 
quadratic contractions of the Riemann curvature tensor $R^{2}$, $R_{ab}R^{ab}$, 
$R_{abcd}R^{abcd}$ and $\veps^{klmn} R_{klst}R^{st}{}_{mn}$ where $\veps^{klmn}$ 
is the antisymmetric 4-volume element. In fact, in the quantum field picture, the effects 
of renormalisation are expected to add such terms to the Lagrangian in order to give a 
first approximation to some quantised theory of gravity \cite{DeWitt:1967,Birrell+Davies}. 
The Lagrangian density that can be constructed from the generalisation is of the form
\be
\Lietwo_{g}=\ \sqrt{-g} ~f(R, R_{ab}\,R^{ab}, R_{abcd}\,R^{abcd} ) ~.
\ee
It is a well known result that \cite{DeWitt:1965, Buchdahl:1970, Barth:1983}, 
\begin{align}
(\delta/\delta g_{ab})\int &  dV \bra{R_{abcd}\,R^{abcd}-4R_{ab}\,R^{ab}+R^2}=0\;,\\
(\delta/\delta g_{ab})\int & dV \,\veps^{klmn} R_{klst}\,R^{st}{}_{mn}= 0\;,
\end{align}
that is, the functional derivative of the Gauss-Bonnet invariant 
$R_{abcd}\,R^{abcd}-4R_{ab}\,R^{ab}+R^2$ 
and $\epsilon^{iklm}R_{ikst}\,R^{st}{}_{lm}$ vanish with respect to $g_{ab}$.
If we consider the function $f$ to be linear in $R_{abcd}R^{abcd}$, we can 
use this symmetry to rewrite $R_{abcd}R^{abcd}$ in terms of the other two invariants and 
as a result the action for FOG can be written as:
\bea
{\cal S} &=& \frac12 \int dV \left\{ \sqrt{-g} \bra{c_{0}\,R+ c_{1}\,R^{2}
+ c_{2}\,R_{ab}\,R^{ab}}\right. \nonumber\\&&\left.+2\,\Lietwo_{M}(g_{ab}, \psi) \right\}~. 
\label{generalfr}
\eea
where the coefficients $c_{0}$, $c_{1}$ and $c_{2}$ have the appropriate dimensions. 
Similarly, if the spacetime is homogeneous and isotropic, then because of the 
following identity,
\be
(\delta/\delta g_{ab})\int  dV\,\bra{3R_{ab}\,R^{ab}-R^2}=0\;,
\ee
the term $R_{ab}\,R^{ab}$ can always be rewritten in terms of the variation of $R^2$. 
Though in the present paper we are not discussing isotropic spacetimes, nevertheless even for the spherically symmetric case we can safely assert that  a sufficiently general and ``effective'' fourth-order Lagrangian for  highly symmetric spacetimes contain only powers of $R$. Also this makes the problems more physically realistic as it has been shown that the theories that contain the square of Ricci tensor in the action, suffer from several instabilities. 

Therefore we can write the action as 
\be
{\cal S}= \frac12 \int dV \bras{\sqrt{-g}\,f(R)+ 2\,\Lietwo_{M}(g_{ab}, \psi) } ~.
\label{action}
\ee 
This action represents the simplest generalisation of the Einstein-Hilbert density. 
Demanding that the action be invariant under some symmetry ensures that the 
resulting field equations also respect that symmetry. That being the case, since the 
Lagrangian is a function $R$ only, and $R$ is a generally covariant and locally Lorentz 
invariant scalar quantity, then the field equations derived from the action (\ref{action}) are 
generally covariant and Lorentz invariant. \\
\\
There are different variational principles that can be applied to the action $\cal S$
in order to obtain the field equations. One approach is the \textit{standard metric 
formalism} where variation of the action is with respect to the metric $g_{ab}$ and 
the connection $\Gamma^a_{~~bc}$ in this case is the Levi-Civita one, that is, the metric 
connection
\be
\Gamma^a{}_{bc}=\frac{1}{2}\,g^{ad} \bra{g_{bd,c}+g_{dc,b}-g_{bc,d}} ~.
\ee
%%%%%%%%%%%%%%%%%%%%%%%%%%%%%%%%%%%%%%%%%%%%
\section{Field equations in metric formalism}
%%%%%%%%%%%%%%%%%%%%%%%%%%%%%%%%%%%%%%%%%%%%

Varying the action \reff{action} with respect to the metric $g_{ab}$ over a 4-volume 
yields:
\bea
\delta{\cal S} &=& -\frac12 \int dV  \, \sqrt{-g} \left\{\frac12 f \, g_{ab} \, \delta g^{ab} 
-\fp \,\delta R +T^{M}_{ab} \, \delta g^{ab}\right\} ~,
\eea
where $ ' $ denotes differentiation with respect to $R$, and $ T^{M}_{ab} $ is the matter 
\textit{energy momentum tensor} (EMT) defined as 
\be
T^{M}_{ab}=- \frac{2}{\sqrt{-g}} \, \frac{\delta \Lietwo_{M}} {\delta g^{ab} } ~.
 \label{metricEMT}
\ee
Writing the Ricci scalar as $R= g^{ab}\,R_{ab}$ and assuming the connection is the 
Levi-Civita one, we can write
\be
 \fp  \,\delta R \simeq  \delta g^{ab}\bra{\fp \, R_{ab}
 + g_{ab} \, \Box  \fp- \na_{a}\na_{b} \fp}~,
\ee
where the $\simeq$ sign denotes equality up to surface terms and 
$\Box \equiv \na_{c}\na^{c}$. By requiring that $\delta{\cal S} =0$ with respect to variations in the metric, ergo a stationary action, one has finally 
\bea
 \fp \bra{R_{ab}-\frac12 g_{ab} \, R}&= & \frac12 g_{ab}\,(f-R \,  \fp) 
 +\na_{a}\na_{b}\fp \nonumber\\&&- g_{ab} \,\Box  \fp \label{field1}+ T^{M}_{ab}~.
\eea
The special case $f = R$ gives the standard Einstein field equations.\\
\\
It is convenient to write \reff{field1} in the form of effective Einstein equations as
\be
G_{ab} = \bra{R_{ab}-\frac12 g_{ab} \, R} 
= \tilde{T}^{M}_{ab} + T^{R}_{ab} = T_{ab}~,
\label{field2} 
\ee
where we define $T_{ab}$ as the total EMT with
\be
\tilde{T}^{M}_{ab} = \frac{T^{M}_{ab}}{ \fp} \,,
\label{matterEMT} 
\ee 
and
\be
T^{R}_{ab} = \frac{1}{ \fp} \bras{\frac12g_{ab} \,(f-R\, \fp) 
+\na_{a}\na_{b} \fp- g_{ab}\,\Box \fp}~. 
\label{curvatureEMT}
\ee
The field equations \reff{field2} contain fourth order derivatives of the metric functions, which 
can be seen from the existence of the $\na_{a}\na_{b} \fp$ term in 
\reff{curvatureEMT}. This result also follows from a corollary of 
Lovelock's theorem \cite{Lovelock:1971, Lovelock:1972} which states that in a 
four-dimensional Riemannian manifold, the construction of a metric theory of modified 
gravity must admit higher than second order derivatives in the 
field equations. Though this is an undesirable feature in a Lagrangian 
based theory as it can lead to Ostrogradski instabilities \cite{Ostrogradsky:1850} 
in the solutions of the field equations, the $f(R)$ theories are special as in these this instability can be avoided \cite{Woodard:2007}, due to the existence of an 
equivalence with scalar-tensor theories. \\

%%%%%%%%%%%%%%%%%%%%%%%%%%%%%%%%%%%%%%%%%%%%
\section{Schwarzschild solution and it's stability}
%%%%%%%%%%%%%%%%%%%%%%%%%%%%%%%%%%%%%%%%%%%%

We know that in GR, the rigidity of spherically symmetric vacuum solutions of Einstein's field equations continues 
even in the perturbed case. Particularly, almost spherical symmetry and/or almost 
vacuum implies almost static or almost spatially homogeneous \cite{amostBirk,amostBirkmatter,varBirk}. 
This result emphasises the stability of Schwarzschild solution in general relativity. 

In $f(R)$-gravity, the extension of this result is not so obvious due to the presence of an extra scalar degree of freedom in the field equations. However, it has been shown recently that a Birkhoff-like theorem does exist in these theories \cite{BirkfR}, that states the following:
\textit{For $f(R)$ gravity, where the function $f$ is of class $C^3$ at $R=0$, with $f(0) = 0$ 
and $f'_{0} \ne 0$, the only spherically symmetric solution with vanishing Ricci scalar in 
empty space in an open set $\mathcal{S}$, is one that is locally equivalent to part of 
maximally extended Schwarzschild solution in $\mathcal{S}$.} The stability of this local theorem in the perturbed case has been formulated as: : {\em For $f(R)$ gravity, where the function $f$ is of class $C^3$ at $R=0$, with $f(0) = 0$ 
and $f'_{0} \ne 0$, any almost spherically symmetric solution with almost vanishing Ricci scalar in 
empty space in an open set $\mathcal{S}$, is locally almost equivalent to part of 
maximally extended Schwarzschild solution in $\mathcal{S}$.} The important point to note here is that the size of the open set ${\mathcal S}$ depends on the 
parameters of the theory (namely the quantity $\fpp(0)$) and the Schwarzschild mass) and they can be always tuned such that 
the perturbations continue to remain small for a time period which is greater than the age of the 
universe. This clearly indicates that the local spacetime around almost spherical stars will be stable in the 
regime of linear perturbations in these modified gravity theories.

%-------------------------------------------------------------------------------------------------------------------------------------------
\subsection{Linear perturbation of Schwarzschild black hole in f(R) gravity}
%--------------------------------------------------------------------------------------

In GR, the two fundamental second-order wave equations govern the gravitational perturbations of Schwarzschild black holes are the Regge-Wheeler equation \cite{Regge} and the Zerilli equation \cite{Zerilli}. The former equation describes the the odd perturbations and the latter the even perturbations. Both the equations satisfy a Schrodinger-like equation and the effective potentials of these equations is shown to have the same spectra \cite{Chandrasekhar}. These waves are tensorial, and are sourced by small deviation from the spherical symmetry of the Schwarzschild black hole in vacuum. 

For $f(R)$-gravity it is evident from the {\it Almost Birkhoff-like theorem} stated in the previous section that there can be two types of perturbations. The first is the tensor perturbation driven by small departure from the spherical symmetry (like GR), whereas the second one is the scalar perturbation that is sourced by perturbations in the Ricci scalar, which vanishes in the unperturbed background. This is an extra mode, that is generated by the extra scalar degree of freedom in these theories and is absent in GR. The detections of these modes are of a crucial importance in asserting the validity or otherwise of GR as the theory of gravity. We will now briefly discuss about the wave-equations governing these two different kind of perturbations in $f(R)$-gravity.

%---------------------------------------------------------------------------------------
\subsubsection{Tensor perturbations}
%--------------------------------------------------------------------------------------

In \cite{Anne}, it has been explained in detail, that in $f(R)$-gravity, one can construct a transverse traceless gauge independent 2-tensor, whose coefficients of harmonic decomposition $M_T$ obey the same Regge-Wheeler equation as in GR. In terms of the `tortoise' coordinate $r_*$, which is related to the usual radial coordinate  $r$ by
\be
r_*=r+2m\,\ln\bra{\frac{r}{2m}-1}~,
\label{tortoise}
\ee
this equation can be written in the form
\be
\bra{\frac{d^2}{dr_*^2}+\kappa^2-V_T}M_T= 0~,
\label{schroedTens}
\ee
with the effective potential $V_T$  
\be
V_T=\bra{1-\frac{2m}{r}}\bras{\frac{\ell\bra{\ell+1}}{r^2}
-\frac{6m}{r^3}}~,
\label{Gravitypot}
\ee 
and we have factored out the harmonic time dependence part of $M_T$, which is $\exp(i\kappa t)$.
As $V_T$ is the \emph{Regge\,-Wheeler potential} for 
\emph{gravitational perturbations}. This clearly indicates that the tensorial modes of the gravitational perturbations in $f(R)$-gravity have the same spectrum as in GR and hence observationally it is impossible to differentiate between the two through these modes.

%---------------------------------------------------------------------------------------
\subsubsection{Scalar perturbations}
%--------------------------------------------------------------------------------------

Taking the trace of the equation (\ref{field2}) in vacuum we get 
\be\label{trace}
3\Box \fp +R\fp -2f=0\;,
\ee
 which is a wave equation in terms of the Ricci scalar $R$ associated with scalar modes. These modes are not present in GR as can be seen by substituting $f(R)=R$ in the above equation, which gives $R=0$. Hence in vacuum spacetimes in GR there can not be any perturbations in Ricci scalar. However this is possible in $f(R)$ gravity and we can Taylor expand the function $f$ around $R=0$ (using $f(0)=0$ for the existence of Schwarzschild solution) to get
\be\label{expansion}
f(R) = f'_0R + \frac{f''_0}{2}R + \hdots\;.
\ee
Using the tortoise coordinates, rescaling $R=r^{-1}\mathcal{R}$, and factoring out the time dependence part $\exp(i\kappa t)$ from $\mathcal{R}$ we get,
\begin{equation}
\left(\frac{d^2}{dr^{2}_{*}}+\kappa^{2}-V_{S}\right)\mathcal{R}=0 \label{ode1}
\end{equation}
where
\begin{equation}\label{Scalpot}
V_{S}=\left(1-\frac{2m}{r}\right)\left[\frac{l(l+1)}{r^2}+\frac{2m}{r^3}+{U}^2\right]\;
\end{equation}
is the Regge-Wheeler potential for the scalar perturbations and 
\be
{U}^{2}=\frac{f'_{0}}{3f''_{0}}.
\ee
The form of the wave equations \reff{ode1} is similar to a one dimensional 
Schr\"{o}dinger equation and hence the potential correspond to a single 
potential barrier. This equation can be made dimensionless by multiplying through with 
the squire of the black hole mass $m$. In this way the potential \reff{Scalpot} 
becomes
\bea
V_S&=&\bra{1-\frac{2}{r}}\bras{\frac{\ell\bra{\ell+1}}{r^2}
+\frac{2}{r^3}+u^2}~,
\label{normalpotential}
\eea
where we have defined (and dropped the tildes),
\be
\tilde{r} = \frac{r}{m}~, \qquad \ti{u} = m\,{U}~, \qquad \ti{\kappa}=m\kappa \;.
\label{dimensionless}
\ee
\begin{figure}[ht] 
\caption{Potential profile $V_S$ for l=2,  3, 4}
   \centering
   \includegraphics[width=8cm]{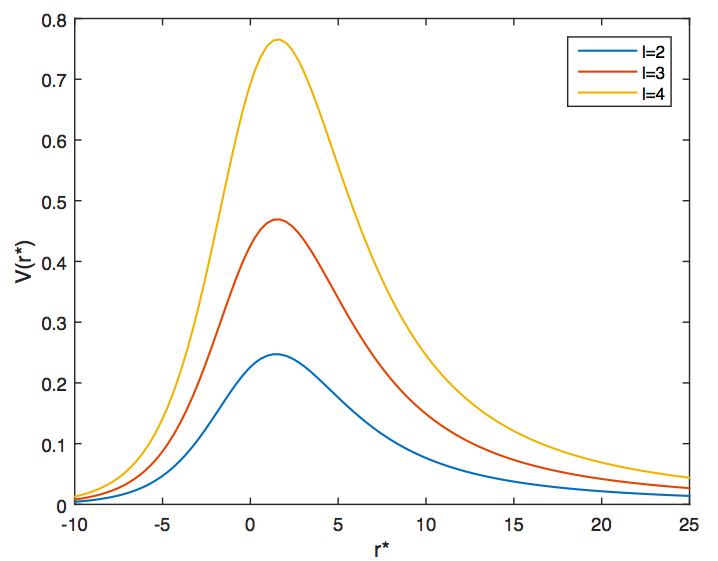} 
\end{figure}
For scalar perturbations with $u = 0$, the potential has two extrema, one in the unphysical region $r<0$ and the other in  $r>0$. In the case of the scalar perturbations with $u \ne 0$, for a certain range of $u$, the potential has three extrema: one in the unphysical region $r<0$, a local maximum at $r_{max}$ and local minimum at  $r_{min}$ such that $2<r_{max}<r_{min}$. 

\section{Infra-red Cutoff for incoming Scalar Waves}

Let us now look at the equation governing the scalar waves (\ref{ode1}) and the form of the potential (\ref{normalpotential}), to study the limiting behaviour of the waves. This will help us specify the physically realistic boundary conditions. At $r_{*}\rightarrow -\infty$, (which implies the horizon at $r=2$), we have $V_{S}=0$, and equation (\ref{ode1}) becomes
\begin{equation}
\left(\frac{d^2}{dr^{2}_{*}}+\kappa^{2}\right)\mathcal{R}=0\;, \label{ode2}
\end{equation}
which is an usual harmonic equation with two linearly independent solutions
\be\label{sol1}
\mathcal{R}\sim C_1 \exp{(i\kappa r_{*})} + C_2 \exp{(-i\kappa r_{*})}\;.
\ee
Since we do not have any {\it outgoing} mode at the horizon, this implies $C_2=0$.  On the other hand,
at $r_{*} = +\infty$, equation (\ref{ode1}) becomes
\begin{equation}
\left(\frac{d^2}{dr^{2}_{*}}+\kappa^{2}-u^2\right)\mathcal{R}=0 \label{ode2}
\end{equation}
with
  \be
 \mathcal{R}\sim C_3\exp{(i\sqrt{\kappa^2-u^2}r_*)} +C_4\exp{(-i\sqrt{\kappa^2-u^2}r_*)}\;.
  \ee
At this point, we come to a very important proposition which we state as follows:
\begin{prop}
The parameters of the theory in $f(R)$ gravity provides a cut-off for long wavelength spherical incoming waves from infinity.
\end{prop}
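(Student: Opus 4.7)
The plan is to analyse the asymptotic behaviour of the scalar-wave equation at spatial infinity, extract an effective dispersion relation, and then distinguish between propagating and evanescent regimes. First I would take equation \reff{ode2} in the limit $r_* \to +\infty$, namely $\mathcal{R}'' + (\kappa^2 - u^2)\mathcal{R} = 0$, where the harmonic time dependence $\exp(i\kappa t)$ has already been factored out. The qualitative character of its solutions hinges entirely on the sign of $\kappa^2 - u^2$, so the whole proof is really a sign analysis of this quantity.

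Next I would split into two cases. For $\kappa > u$ the quantity $\sqrt{\kappa^2 - u^2}$ is real, and the two independent solutions $\exp(\pm i\sqrt{\kappa^2 - u^2}\,r_*)$ represent genuine outgoing and incoming oscillatory modes, so a propagating incoming wave from infinity is allowed. For $\kappa < u$, however, $\sqrt{\kappa^2 - u^2} = i\sqrt{u^2 - \kappa^2}$ is purely imaginary, and the two independent solutions become real exponentials $\exp(\pm \sqrt{u^2 - \kappa^2}\,r_*)$. Requiring the perturbation to remain bounded at infinity eliminates the growing branch, and the surviving mode is a non-propagating evanescent tail that carries zero time-averaged radial energy flux through a sphere of large radius. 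Hence in this regime no genuine incoming wave from infinity can exist.

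Then I would read off the cut-off. The critical wave number is $\kappa_{\rm crit} = u$, which in the original dimensional variables is $\kappa_{\rm crit} = U = \sqrt{f'_0/(3 f''_0)}$. Equivalently, there is a maximum wavelength $\lambda_{\max} = 2\pi/U$ above which spherical scalar waves cannot reach the black-hole potential barrier from infinity at all. Since $U$ is built exclusively out of the theory parameters $f'_0$ and $f''_0$ of the underlying $f(R)$ Lagrangian, this is exactly the infra-red cut-off asserted in the proposition.

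The only delicate step is conceptual rather than computational: one must justify that the exponentially decaying evanescent mode in the sub-cut-off regime does not qualify as a physical incoming wave. Observing that this mode is real up to the overall $\exp(i\kappa t)$ phase, and hence transports zero time-averaged radial flux through any large sphere, closes that gap. Everything else is a direct reading of the dispersion relation off the asymptotic equation, so I expect the proof to be short once the correct physical interpretation of the evanescent branch is spelt out.
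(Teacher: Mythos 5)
Your proposal is correct and follows essentially the same route as the paper: both examine the asymptotic form of the wave equation at $r_* \to +\infty$ and observe that for $\kappa < u$ the would-be incoming mode $\exp(i\sqrt{\kappa^2-u^2}\,r_*)$ turns into a decaying real exponential, so no propagating incoming wave exists below the cut-off set by $u$. Your extra remarks on discarding the growing branch and on the vanishing time-averaged flux of the evanescent tail merely make explicit what the paper leaves implicit.
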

\begin{proof}
 When $u^2 > \kappa^2$, we can immediately see for the incoming modes, 
 \be
 \lim_{r_* \to \infty} \mathcal{R}_{in} =  C_3\exp{(-\sqrt{-\kappa^2+u^2}r_*)}\to 0
 \ee
 Hence, there are no incoming scalar waves at $r_* \to \infty$ for $\kappa<u.$
  \end{proof}
 As we are interested in the scattering of incoming scalar waves from infinity by the black hole potential barrier, in the following sections we choose the parameters of the theory, such that $u^2<< \kappa^2$. Hence for all practical purposes we have ${\kappa'}\equiv\sqrt{\kappa^2 - u^2}=\kappa$.
 
 %---------------------------------------------------------------------------------------
\section{Study of potential scattering using Jost functions}
%--------------------------------------------------------------------------------------

In this section we investigate in detail, how the scalar waves from infinity get scattered by the black holes in $f(R)$-gravity. This scattering (that depicts the reflexion and transmission) is due to the one dimensional potential barrier of the Schrodinger-like equation governing the perturbations. We set our boundary conditions in a way that there is no {\it outgoing} wave from the event horizon. Considering an influx of incoming waves from infinity, we would like to know that what fraction of these waves gets reflected by the potential barrier and what fraction gets transmitted to the black hole. Our analysis here is quite similar to the analysis presented in \cite{chandra}. We use the method of {\it Jost function}, which is the Wronskian of the regular solution and the (irregular) Jost solution to the differential equation.

Equation ($\ref{ode1}$) is an ODE integrable over $(-\infty, \infty)$. Moreover, $V_{S}(-\infty) = 0$ and $V_{S}(\infty) = u^2$. If we let $r_{*}\rightarrow \pm \infty$ in equation $\ref{ode1}$, we obtain two particular solutions with the asymptotic behaviours
$$\mathcal{R}_{1}(r_{*}, \kappa)\sim e^{-i\kappa' r_{*}}\sim e^{-i\kappa r_{*}}, \qquad (r_{*} \rightarrow +\infty)$$ and 
$$\mathcal{R}_{2}(r_{*}, \kappa) \sim e^{i\kappa r_{*}}, \qquad (r_{*} \rightarrow -\infty)$$ 
which are independent since their Wronskian
\begin{align}
\left[ \mathcal{R}_{1}(r_{*}, \kappa), \mathcal{R}_{2}(r_{*}, \kappa)\right] &= +2i\kappa \neq 0.
\end{align}

For real $\kappa$, the solution represents ingoing and outgoing waves at $\pm \infty$. This problem becomes one of reflection and transmission of incident waves by the potential barrier, $V_{S}$.  We seek solutions satisfying of the wave equation (\ref{ode1}) and the boundary conditions, 

\begin{equation} \label{eqn02}
\mathcal{R}_{2}(r_{*}, \kappa) = \frac{R_{1}(\kappa)}{T_1(\kappa)} \mathcal{R}_1(r_{*}, \kappa) ) + \frac{1}{T_1(\kappa)} \mathcal{R}_{1}(r_{*}, -\kappa) 
\end{equation}
and 
\begin{equation} \label{eqn03}
\mathcal{R}_{1}(r_{*}, \kappa) = \frac{R_2(\kappa)}{T_2(\kappa)} \mathcal{R}_2(r_{*}, \kappa) ) + \frac{1}{T_2(\kappa)} \mathcal{R}_{2}(r_{*}, -\kappa)
\end{equation}
where $R_1(\kappa)$, $R_2(\kappa)$, $T_1(\kappa)$, $T_2(\kappa)$ are distinct functions that exist if $\kappa \neq 0$.
Here we can easily see that $T_1(\kappa)\mathcal{R}_{2}(r_{*}, \kappa) $ corresponds to an $incident$ $wave$ of unit amplitude from $+\infty$ giving rise to a $reflected$ $wave$ of amplitude ${R}_1(\kappa)$ and a transmitted wave of amplitude $T_1(\kappa)$.  
In the theory of potential scattering, the Jost functions are defined by
\begin{equation}
 m_{1}(r_{*}, \kappa) = e^{+ i \kappa r_{*}} \mathcal{R}_{1}(r_{*}, \kappa) 
 \end{equation}
 and
 \begin{equation}
 m_{2}(r_{*}, \kappa) = e^{- i \kappa r_{*}} \mathcal{R}_{2}(r_{*}, \kappa)
\end{equation}
which satisfy the boundary conditions
\begin{align}
m_{1}(r_{*}, \kappa) &\to 1 \quad \text{as} \quad r_{*} \to +\infty \nn \\
\text{and} \quad m_{2}(r_{*}, \kappa) &\to 1 \quad \text{as} \quad r_{*} \to -\infty. \label{bc2}
\end{align}

Equations (\ref{eqn02}) and (\ref{eqn03}) can be respectively written in terms of the Jost functions as
\begin{align}
T(\kappa)  m_{2}(r_{*}, \kappa) &= R_{1}(\kappa) e^{-2 i \kappa r_{*}} m_{1}(r_{*}, \kappa) + m_{1}(r_{*}, -\kappa),
\end{align}
and 
\begin{equation}
T(\kappa) m_{1}(r_{*}, \kappa) = R_{2}(\kappa) e^{+2 i \kappa r_{*}} m_{2}(r_{*}, \kappa) + m_{2}(r_{*}, -\kappa),
\end{equation}
where $T_1(\kappa) = T_2(\kappa)= T(\kappa)$.
From the conditions imposed in (\ref{bc2}), it follows that 
\begin{equation}
m_{1}(r_{*}, \kappa) = \frac{R_{2}(\kappa)}{T(\kappa)} e^{+2 i \kappa r_{*}} + \frac{1}{T(\kappa)} + o(1)  \quad (r_{*} \to -\infty),
\end{equation}
and
\begin{equation}
 m_{2}(r_{*}, \kappa) = \frac{R_{1}(\kappa)}{T(\kappa)} e^{-2 i \kappa r_{*}} + \frac{1}{T(\kappa)} + o(1) \quad (r_{*} \to +\infty) .\label{eqn49}
\end{equation}
Let us now write 
\be
\mathcal{R}_{2}(r_{*}, \kappa) = e^{i \kappa r_{*}} + \psi(r_{*}, \kappa).
\ee
We note that $\psi \to 0$ as $r_{*} \to -\infty$, and  $\psi$ satisfies the differential equation
\begin{equation}
\left(\frac{d^{2}}{d r_{*}^{2}} + \kappa^{2}\right) \psi = \left(e^{i \kappa r_{*}} + \psi \right) V_{s}.
\end{equation}

Now we know that, given any linear ODE of the form
$L \psi(x) = -f(x)$, where $L$ is the linear harmonic differential operator, 
the solution is given by Green\rq{}s function
\be
\psi(x) = \int G(x, x\rq{})f(x\rq{}) dx\rq{},
\ee
where 
\be
G(x, x\rq{}) = \frac{1}{\kappa} \left[ \frac{1}{2 i} \left( e^{i \kappa (x-x\rq{})} - e^{-i \kappa (x-x\rq{})}\right) \right]\;.
\ee
Therefore we can write the solution $\psi(x)$ in the form:
\begin{align}
\psi(r _{*}, \kappa) =& \frac{1}{2 i \kappa} \int_{-\infty}^{r_{*}}\left[ e^{i \kappa (r_{*}\rq{} - r_{*})} - e^{-i \kappa (r_{*}\rq{} - r_{*})} \right] V_{S}(r_{*}\rq{}) \nn \\
&\times \left[ e^{i \kappa r_{*}} + \psi(r_{*}, \kappa) \right] dr_{*}\rq{}
\end{align}
Using the above equations we now get an integral equation for the Jost function as 
 \begin{align}\label{veqn}
m_{2}(r _{*}, \kappa) = &1 - \frac{1}{2 i \kappa} \int_{-\infty}^{r _{*}} \left( e^{2 i \kappa (r_{*}\rq{} - r_{*})} - 1 \right) \nn \\
&\times V_{s}(r_{*}\rq{}) m_{2}(r_{*}\rq{}, \kappa) dr_{*}\rq{}
\end{align}
which is a Volterra integral equation of the second kind. In the next section we give a numerical scheme to solve this equation, which will then provide us the required expressions for reflected and transmitted waves. In figure 1, we have plotted the nature of the Jost function $m_2(r_{*}\rq{}, \kappa)$.

 \begin{figure}[ht] 
\caption{Jost function for l=2; u=0.001}
   \centering
   \includegraphics[width=7cm]{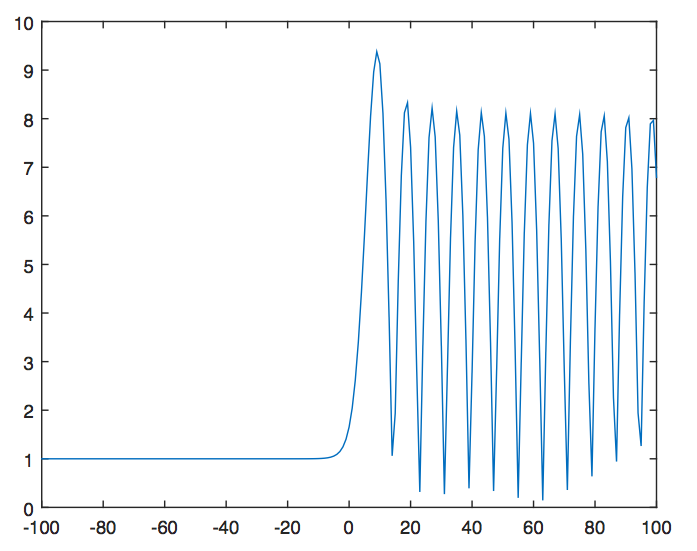}    %u3=>u=0.001
\end{figure}

%---------------------------------------------------------------------------------------
\section{Numerical Solution}
%---------------------------------------------------------------------------------------
 Given a  Volterra Integral Equation of the second kind (\ref{veqn}), which is of the form
 \begin{equation} \label{vie}
 u (x) = f(x) + \lambda \int_a^x K(x,y)u(t)dt,
 \end{equation}
 we divide the interval of integration ($a,x$) into $n$ equal subintervals, $\Delta t = \frac{x_n - a}{n}$, where $n \geq 1$ and $x_n = n.$
 Also let $y_0 = a$, $x_0 = t_0$, $x_n = t_n = x$, $t_j = a + j\Delta t = t_0  + j\Delta t$,  $x_0 + i\Delta t = a + i\Delta t = t_i.$ 
 Using the trapezoid rule, the integral can now be written as
 \begin{align}
 &\int_{a}^x K(x,t) u(t) dt \nn \\
 &\approx \Delta t \left[\frac{1}{2}K(x,t_0) u(t_0) + K(x,t_1) u(t_1) + \hdots \right. \nn \\
 &+ \left. K(x,t_{n-1}) u(t_{(n-1)}) + \frac{1}{2} K(x,t_n) u(t_n) \right],
 \end{align}
 where $\Delta t = \frac{t_j - a}{j} = \frac{x-a}{n}, $ $t_j \leq x, $ $j \geq 1, $ $x = x_n = t_n.$\\
 Using the above, equation (\ref{vie}) can be discretised as
 \begin{align}  \label{discret}
 u(x)&= f(x) + \lambda \Delta t \left[\frac{1}{2}K(x,t_0) u(t_0) + K(x,t_1) u(t_1) + \hdots \right. \nn \\
 &+ \left. K(x,t_{n-1}) u(t_{(n-1)}) + \frac{1}{2} K(x,t_n) u(t_n) \right]~.
 \end{align}
 Since $K(x,t) \equiv 0$ when $t > x$ (the upper limit of the integration ends at $t = x$), then $K(x_i, t_j) = 0$ for $t_j > x_i$. Numerically, equation (\ref{discret}) becomes
 \begin{align}
 u(x_i) = f(x_i) + \lambda \Delta t \left[\frac{1}{2}K(x_i,t_0) u(t_0) + K(x_i, t_1) u(t_1) \right. \nn \\
 + \hdots + \left. K(x_i,t_{j-1}) u(t_{(j-1)}) + \frac{1}{2} K(x_i,t_j) u(t_j) \right] , 
 \end{align}
 where $i = 1, 2, \hdots ,n \quad t_j \leq x_i$ and $u(x_0) = f(x_0).$
 Denoting $u_i = u(x_i)$, $f_i = f(x_i)$ and $K_{ij} = K(x_i, t_j)$, we can write the numeric equation in a simpler form as 
 \begin{align}
 u_0 &= f_0 \nn \\
 u_i & = f_i + \lambda \Delta t \left[\frac{1}{2}K_{i0} u_0 + K_{i1} u_1 \right. + \hdots  + \left.K_{i(j-1)} u_{j-1} + \frac{1}{2} K_{ij} u_j \right] ,
 \end{align}
 with $i = 1, 2, \hdots ,n$ and $j \leq i$. Therefore there are $n+1$ linear equations 
 \begin{align}
 u_0 &= f_0 \nn \\
 u_1 &= f_1 + \lambda  \Delta t \left[\frac{1}{2}K_{10} u_0 + K_{11} u_1\right] \nn \\
 u_2 &= f_2 + \lambda \Delta t \left[\frac{1}{2}K_{20} u_0 + K_{21} u_1 + \frac{1}{2} K_{22}u_2\right] \nn \\
\vdots & \qquad \ \vdots \nn \\
 u_n &= f_n + \lambda \Delta t \left[\frac{1}{2}K_{n0} u_0 + K_{n1} u_1 + \hdots +K_{n(n-1)} u_{n-1} +  \frac{1}{2} K_{nn}u_n\right]\;.
 \end{align}
 Hence  a general equation can be written in compact form as
 \begin{equation}
 u_i  = \frac{f_i + \lambda \Delta t \left[\frac{1}{2}K_{i0} u_0 + K_{i1} u_1 + \hdots + K_{i(i-1)} u_{i-1} \right] }{1 - \frac{\lambda \Delta t}{2} K_{ii}}
 \end{equation}
 and can be evaluated by substituting $u_0, u_1, \hdots, u_{i-1}$ recursively from previous calculations. A MATLAB code was written to evaluate this system of linear equations  for (\ref{veqn}) and the results were used to evaluate the reflexion and transmission coefficients by coding the numerical solution for $m_{2}(x, \kappa)$ with the potential (\ref{normalpotential}) for different values of $u$.

\begin{table}
   \centering
   \begin{tabular}{@{}rrrrcrrrcrrr@{}} 
    \hline
&  \multicolumn{3}{c}{$\bm{R}$} \\
  \cline{2-4}  
      $\kappa$   & ${u=0}$ & ${ u=0.001}$ & ${ u=0.01}$\\
      \hline
   0.10 & 1.0000 & 1.0000 & 1.0000 \\
   0.20 & 0.9995 & 1.0000 & 1.000\\
   0.30 & 0.9690 & 0.9989 & 0.9991\\
   0.32 & 0.9382 & 0.9974 & 0.9980\\
   0.34 & 0.8837& 0.9946 & 0.9955\\
   0.36 & 0.7920 & 0.9886 & 0.9903\\
   0.40 & 0.5441 & 0.9698 & 0.9589\\
    0.50 & - & 0.5028 & 0.5028
   \end{tabular}
   \caption{The \emph{reflection amplitude} ($\bm{R}$), where $l=2$, for various frequencies ($\kappa$) and for different values of $u$.}
   \label{table1}
\end{table}

%--------------------------------------------------------------------------------------
\section{Scalar wave scattering: Results and discussions}
%--------------------------------------------------------------------------------------

It is well known \cite{chandra} that the solution to the Volterra integral equation (\ref{veqn}) is analytic for lower half of complex $\kappa$ plane and is continuous for $\Im(\kappa)\le0$.
In this case, the solution obtained by repeated iterations always converges and $m_{2}(r _{*}, \kappa)$ can be expanded as a power series in $1/\kappa$. These facts indicate the following:
\begin{align}
m_{2}(x, \kappa) &= 1 -e^{-2 i \kappa r_{*}} \frac{1}{2 i \kappa} \int_{-\infty}^{+\infty}  e^{2 i \kappa r_{*}\rq{}} V_{S}(r_{*}\rq{}) m_{2}(r_{*}\rq{}, \kappa) dr\rq{} + \nonumber\\
& \frac{1}{2 i \kappa} \int_{-\infty}^{+\infty}V_{S}(r_{*}\rq{}) m_{2}(r_{*}\rq{}, \kappa) dr_{*}\rq{} + o(1)\label{results}
\end{align}
 
Comparing the above result with equation (\ref{eqn49}) immediately gives the relation between reflexion and transmission coefficients and the Jost function as 
\begin{equation}
\frac{R_{1}(\kappa)}{T(\kappa)} = - \frac{1}{2 i \kappa} \int_{-\infty}^{+\infty}  e^{(2 i \kappa r_{*}\rq{})}  V_{S}(r_{*}\rq{}) m_{2}(r_{*}\rq{}, \kappa) dr_{*}\rq{}
\end{equation}
\begin{equation}
\frac{1}{T(\kappa)} = 1 +  \frac{1}{2 i \kappa} \int_{-\infty}^{+\infty}  V_{s}(r_{*}\rq{}) m_{2}(r_{*}\rq{}, \kappa) dr_{*}\rq{}
\end{equation}
From the above expression, the following conservation condition can be verified easily:
\be
\bm{R}+\bm{T}\equiv|R_1|^2+|T|^2=1
\ee

The reflection wave amplitude $\bm{R}$ for various frequencies and for different values of $u$ are summarised in table \ref{table1}. From this analysis we find a few interesting results, which we summarise as follows:
\begin{enumerate}
\item If we compare the reflection coefficients of the tensor waves in GR from \cite{chandra}, which will be the same in $f(R)$-gravity, we see that for low wavelengths, larger fraction of the scalar waves get reflected (in comparison to tensor waves) from the black hole potential barrier. This may provide a novel observational signature for modified gravity or otherwise. 
\item Furthermore from the table we can immediately see that as $u$ increases, the tendency of reflection increases for long wavelength scalar waves. This trait continues till the infra-red cutoff happens for a given frequency.
\item Also these calculations depict that as $u$ increases, reflection wave amplitude attains a plateau near $\bm{R}=1$ for long wavelengths that suddenly drops off for higher frequency.
\end{enumerate}

We would like to emphasise here that these results are only applicable in the scenario's where the frequency of the scalar waves are much larger than $u$ (which is given by the parameters of the theory of gravity considered). Otherwise there will be a completely different scenario in terms of localisation of the scalar waves, which will be reported elsewhere.

 \begin{acknowledgments}
All the authors are supported by National Research Foundation (NRF), South Africa. SDM 
acknowledges that this work is based on research supported by the South African Research Chair Initiative of the Department of
Science and Technology and the National Research Foundation.
\end{acknowledgments}

%%%%%%%%%%%%%%%%%%%%%%%%%%%%%%%%%%%%%%%%%%%%%%%%%%%%%

\end{document}